\newtheorem{theorem}{Theorem}
\newtheorem{lemma}{Lemma}   
\newtheorem{definition}{Definition}
\newtheorem{assumption}{Assumption}
\newtheorem{proof}{Proof}
\def\BibTeX{{\rm B\kern-.05em{\sc i\kern-.025em b}\kern-.08em
    T\kern-.1667em\lower.7ex\hbox{E}\kern-.125emX}}
\begin{document}
\title{Information-Theoretic Generalization Analysis for Topology-aware Heterogeneous Federated  Edge Learning over Noisy Channels}
\author{Zheshun Wu, Zenglin Xu, \IEEEmembership{Senior Member, IEEE}, Hongfang Yu, \IEEEmembership{Member, IEEE}, and Jie Liu, \IEEEmembership{Fellow, IEEE}
\thanks{This work was partially supported by a key program of fundamental research from Shenzhen Science and Technology Innovation Commission (No. JCYJ20200109113403826), the Major Key Project of PCL (No. 2022ZD0115301), and an Open Research Project of Zhejiang Lab (NO.2022RC0AB04). (\emph{Corresponding author: Zenglin Xu.})}
\thanks{Zheshun Wu, Zenglin Xu, and Jie Liu are with the School of Computer Science and Technology, Harbin Institute of Technology Shenzhen, Shenzhen 518055, China (e-mail:
wuzhsh23@gmail.com; xuzenglin@hit.edu.cn; jieliu@hit.edu.cn).}
\thanks{Hongfang Yu is with the School of Information and Communication Engineering, University of Electronic Science and Technology of China, Chengdu
610054, China (e-mail: yuhf@uestc.edu.cn).}}

\markboth{Journal of \LaTeX\ Class Files,~Vol.~18, No.~9, September~2020}%
{How to Use the IEEEtran \LaTeX \ Templates}

\maketitle

\begin{abstract}
With the rapid growth of edge intelligence, the deployment of federated learning (FL) over wireless networks has garnered increasing attention, which is called Federated Edge Learning (FEEL). In FEEL, both mobile devices transmitting model parameters over noisy channels and collecting data in diverse environments pose challenges to the generalization of trained models. Moreover, devices can engage in decentralized FL via Device-to-Device communication while the communication topology of connected devices also impacts the generalization of models. Most recent theoretical studies overlook the incorporation of all these effects into FEEL when developing generalization analyses and ignore designing algorithms to enhance the generalization of models based on their analysis.  In contrast, our work presents an information-theoretic generalization analysis for topology-aware FEEL with data heterogeneity and noisy channels. Additionally, we propose a novel regularization method called Federated Global Mutual Information Reduction (FedGMIR) to enhance the performance of models based on our analysis. Numerical results validate our theoretical findings and provide evidence for the effectiveness of the proposed method.
\end{abstract}

\begin{IEEEkeywords}
Generalization analysis, Federated Edge learning, Information theory, Communication topology, Noisy channels.
\end{IEEEkeywords}

\section{Introduction}
\IEEEPARstart{D}{ue} to the development of edge intelligence in the Beyond fifth-generation (B5G) mobile communication, federated learning (FL), a privacy-preserved  distributed machine learning framework, has been considered as a key technique to assist intelligent tasks over wireless networks~\cite{DBLP:journals/twc/AmiriG20}. Mobile devices utilize local datasets for collaborative  training, a practice known as Federated Edge Learning (FEEL). 
FEEL can be deployed in one of two ways: devices can either upload models to an edge server through Centralized Federated Learning (CFL) or they can exchange models directly with other devices using Device-to-Device (D2D) communication in Decentralized Federated Learning (DFL)~\cite{DBLP:journals/jsac/XingSB21}. A key challenge in FL is ensuring that models trained on data collected by clients can generalize effectively to unseen test data. This challenge has led to in-depth theoretical analyses on the generalization capabilities of FL~\cite{DBLP:conf/iclr/HuLL23,DBLP:journals/corr/abs-2306-03824,wu2023federated}.

%Specifically, distributed mobile devices employ local datasets to collaboratively train  models,  which is called Federated Edge Learning (FEEL). As a prominent example of FEEL, drones support diverse FL tasks over wireless networks. For instance, as illustrated in Fig.~\ref{fig:UAV}, drones  capture aerial images of farmland and further serve as FL clients to train models for data analysis~\cite{DBLP:journals/micro/KumarGT22}.  The deployment of FEEL involves  either  devices uploading models to the edge server via Centralized Federated Learning (CFL)  or exchanging models with other devices using Device-to-Device (D2D) communication through Decentralized Federated Learning (DFL)~\cite{DBLP:journals/jsac/XingSB21}.  Additionally, a key concern in FL is to generalize models trained on client-collected training samples to unseen test data, along with the subsequent theoretical analysis on the generalization of FL systems~\cite{DBLP:conf/iclr/HuLL23,DBLP:journals/corr/abs-2306-03824}.

A key obstacle to FEEL's generalization lies in the requirement for devices to relay models via wireless channels. Disruptions from wireless channels like channel noise and path loss not only affect the accurate reception of models but also have downstream consequences on model aggregation and the overall performance of FL~\cite{DBLP:journals/jsac/XingSB21}. Besides, communication topologies affect the speed of information exchange among D2D devices, thereby affecting the transmission of models and the performance of  DFL~\cite{sun2021stability}. Additionally, data distributions across devices are diverse since data collection in distinct environments, leading to the Non-Independent and Identically Distributed (Non-IID) phenomenon~\cite{DBLP:conf/icde/LiDCH22}. Consequently, it is vital to investigate the impacts of the aforementioned factors on the generalization of FEEL through theoretical analysis, as well as to design algorithms that enhance the  performance of models for better deploying FL systems in wireless networks.

%meta-distribution. use the algorithmic stability while ignore wireless FL and the channel noise. ignore the data heterogeneity and deploy on D2D system.
Recently, several works attempt to develop the generalization analysis for FL. For instance, \cite{DBLP:conf/iclr/HuLL23} provides an algorithm-independent generalization bound based on the restrictive meta-distribution assumption. \cite{DBLP:journals/corr/abs-2306-03824} utilizes the algorithmic stability to analyze the generalization but they only consider FL deployment over noise-free links. \cite{DBLP:journals/entropy/BarnesDP22,DBLP:conf/spawc/YagliDP20} analyze the generalization of FL via information-theoretic generalization analysis while lacking research on the effect of data heterogeneity and ignoring the DFL scenario. In brief, these studies do not consider the generalization of realistic FEEL scenarios by integrating key factors in wireless networks. Moreover, they neglect how to devise methods for improving the  performance of FL by reducing their proposed generalization bounds.

Given the combined impact of data heterogeneity, noisy channels, and communication topologies on the generalization of models, this paper develops information-theoretic generalization bounds for both CFL and DFL scenarios of FEEL and reveals the trade-offs between the generalization of DFL and the communication cost of D2D networks. To tackle these effects, we present a  method called Federated Global Mutual Information Reduction (FedGMIR) to improve the generalization performance of models. Numerical results validate our theoretical results and verify the effectiveness of our methods.

\begin{comment}
    \begin{figure}[ht]
\centering
\captionsetup[subfloat]{font=tiny }	
{\includegraphics[width = 0.35\textwidth]{}}
\caption{Centralized federated learning over noisy channels.}
\label{fig:label}
\end{figure}
\end{comment}

\section{Theoretical Framework}
\subsection{Problem Formulation}

Given an instance space $\mathcal{Z}$ and a hypothesis space $\mathcal{W}$,  a non-negative loss function is defined as $\ell:\mathcal{W} \times \mathcal{Z} \rightarrow \mathbb{R}$. Random variables $Z$ and $W$ have support over spaces $\mathcal{Z}$ and $\mathcal{W}$ respectively. Assume there are $N$ devices participating in FL. We denote the local distribution of device $i$ as $\mu_i$ and  the corresponding local training set $S_i=\{Z^i_1,Z^i_2,...,Z^i_n\}$  is  drawn from $\mathcal{Z}$ based on $\mu_i$ independently, i.e., $Z^i_j\stackrel{i.i.d.}{\sim}\mu_i$. The size of $S_i, \forall i \in [N]$ is $n$. We assume that the local data sources are independent across devices while the distributions are different, i.e., $\mu_i\neq \mu_j,\forall i,j \in [N], i \neq j$, meaning that devices collect data  in different manners independently.

The expected risk minimization is a crucial objective in machine learning~\cite{DBLP:journals/entropy/BarnesDP22,DBLP:conf/isit/PensiaJL18,DBLP:conf/colt/Neu21} and our focus is  to find a model $W \in \mathcal{W}$ minimizing the global expected risk in FL. The global expected risk in FL is defined as $\frac{1}{N}\sum_{i=1}^N \mathcal{L}_{\mu_i}(W)=\frac{1}{N}\sum_{i=1}^N\mathbb{E}_{Z\sim\mu_i}\ell(W,Z)$.  In practice, minimizing the global expected risk is achieved  approximately  by optimizing  the tractable global empirical risk,  defined as $\frac{1}{N}\sum_{i=1}^N \mathcal{L}_{S_i}(W)=\frac{1}{N}\sum_{i=1}^N \frac{1}{n}\sum_{j=1}^n\ell(Z_j^i,W)$, where $Z_j^i$ denotes $j$-th sample  in $i$-th dataset. We then define the approximate error that arises when training the  model by drawing samples from empirical distributions $\{S_i\}_{i=1}^N$ instead of  true distributions $\{\mu_i\}_{i=1}^N$, which is called the generalization error in FL:
\begin{definition}[Generalization error in FL]
\begin{equation}
    gen(\mu,P_{W|S}):=\mathbb{E}_{S,W}\Big[\frac{1}{N}\sum_{i=1}^N \mathcal{L}_{\mu_i}(W)-\frac{1}{N}\sum_{i=1}^N\mathcal{L}_{S_i}(W)\Big],
\end{equation}
where $\mu=(\mu_1,\mu_2,...,\mu_N)$ and $S=(S_1,S_2,...,S_N)$.
\end{definition}

    Suppose $\ell(Z_i,w)$ is $R$-sub-Gaussian for any $w\in \mathcal{W}$ with respect to $Z_i,\forall i \in [N]$, i.e., $\mathbb{E}[\exp(\lambda(\ell(w,Z_i)-\mathbb{E}\ell(w,Z_i)))]\leq \exp(\frac{\lambda^2 R^2}{2})$, based on the results of~\cite{DBLP:journals/entropy/BarnesDP22}, we can immediately have the below lemma:
    \begin{lemma}[Information-theoretic generalization bound for federated learning]\label{lemma:GFL}
    \begin{equation}\label{eq:lemm1}
  |gen(\mu,P_{W|S})|   \leq \sqrt{\frac{2R^2}{nN}\sum_{i=1}^NI(W;S_i)},
    \end{equation}
    where $I(W;S_i)$ represents the mutual information between the local training dataset $S_i$  and the learned global model $W$. This result implies that reducing the dependency between the global model and datasets $\{S_i\}_{i=1}^N$ can avoid  overfitting and improve the generalization of FL, aligning with our intuition.
\end{lemma}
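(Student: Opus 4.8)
The plan is to follow the information-theoretic template of Russo--Zou and Xu--Raginsky, instantiated device-by-device, so that the heterogeneity $\mu_i \neq \mu_j$ never has to be handled through a single coupled measure. First I would split the generalization error linearly over devices,
\[
gen(\mu,P_{W|S}) \;=\; \frac{1}{N}\sum_{i=1}^N \mathbb{E}_{S,W}\big[\mathcal{L}_{\mu_i}(W)-\mathcal{L}_{S_i}(W)\big],
\]
and note that the $i$-th summand depends on the data only through $S_i$: the term $\mathcal{L}_{\mu_i}(W)$ does not involve $S$ at all, and $\mathcal{L}_{S_i}(W)=\frac1n\sum_{j=1}^n\ell(W,Z^i_j)$ depends on $S$ only via $S_i$, so each summand is a functional of the pair $(S_i,W)$.

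Next I would introduce, for each $i$, a ghost sample $\bar S_i \sim \mu_i^{\otimes n}$ drawn independently of $(S,W)$; then $\mathbb{E}_{S,W}[\mathcal{L}_{\mu_i}(W)] = \mathbb{E}_{\bar S_i,W}[\mathcal{L}_{\bar S_i}(W)]$, so the $i$-th summand becomes exactly the gap between $\mathbb{E}[\mathcal{L}_{S_i}(W)]$ evaluated under the true joint law $P_{S_i,W}$ and under the product of marginals $P_{S_i}\otimes P_W$. The workhorse is the standard decoupling estimate: if $h(s,w)$ is $\sigma$-sub-Gaussian under $P_{S_i}\otimes P_W$, then $\big|\mathbb{E}_{P_{S_i,W}}[h]-\mathbb{E}_{P_{S_i}\otimes P_W}[h]\big|\le\sqrt{2\sigma^2 I(S_i;W)}$, which follows from the Donsker--Varadhan variational formula together with $\mathrm{KL}\!\left(P_{S_i,W}\,\|\,P_{S_i}\otimes P_W\right)=I(S_i;W)$. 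Applying this with $h(s,w)=\mathcal{L}_s(w)$: conditionally on $w$ this is an average of $n$ independent $R$-sub-Gaussian random variables, hence $\tfrac{R}{\sqrt n}$-sub-Gaussian, and averaging over an independent $W$ preserves the same parameter, so $\sigma=R/\sqrt n$. This yields $\big|\mathbb{E}_{S,W}[\mathcal{L}_{\mu_i}(W)-\mathcal{L}_{S_i}(W)]\big|\le\sqrt{\tfrac{2R^2}{n}I(S_i;W)}$ for every $i$.

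Finally I would aggregate: the triangle inequality gives $|gen| \le \frac1N\sum_{i=1}^N\sqrt{\tfrac{2R^2}{n}I(S_i;W)}$, and concavity of the square root (Jensen, equivalently Cauchy--Schwarz) converts the average of square roots into the square root of the average, producing $\sqrt{\tfrac{2R^2}{nN}\sum_{i=1}^N I(S_i;W)}$, as claimed; this is essentially the federated restatement of the bound in~\cite{DBLP:journals/entropy/BarnesDP22}. The only places that need care are the ghost-sample rewriting of the population term and tracking that the sub-Gaussian parameter scales as $1/\sqrt n$ (not per sample), which is what puts $nN$ --- the total sample count --- in the denominator. I do not expect a genuine obstacle: the mutual independence of the $S_i$ and the differing $\mu_i$ are helpful rather than harmful, since independence is precisely what lets each device be analyzed in isolation with its own measure, and no joint chain-rule argument over $I(S_1,\dots,S_N;W)$ is needed for the stated (looser) form.
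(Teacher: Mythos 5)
Your argument is correct and is essentially the standard derivation behind this bound: the paper itself gives no proof, importing the result directly from~\cite{DBLP:journals/entropy/BarnesDP22}, whose derivation is exactly your per-device decomposition, the Donsker--Varadhan decoupling estimate with sub-Gaussian parameter $R/\sqrt{n}$ for the empirical average $\mathcal{L}_{S_i}(W)$, and Jensen's inequality to pull the average of square roots inside. No gaps; the only point worth stating explicitly is that the decoupling step centers $\mathcal{L}_{S_i}(w)$ at $\mathcal{L}_{\mu_i}(w)$ for each fixed $w$, so the moment-generating-function bound holds uniformly in $w$ and hence under the product measure $P_{S_i}\otimes P_W$.
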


In the following, we develop the generalization analysis of models trained in CFL and DFL based on  over-the-air computation techniques~\cite{DBLP:journals/jsac/DuJJSH23} over noisy channels. In this paper, we assume that devices update  models via stochastic gradient descent (SGD) and  they only transmit models over Additive White Gaussian Noise (AWGN) channels.  The update rules of  the two scenarios will be introduced below.

\subsubsection{Centralized Federated Learning}
We denote the global model at $t$-th round as $W_t$ and $W_0$ is set as the $\mathbf{0}$~\cite{sun2021stability}. At each communication round, each device samples a data point $Z_t^i \in S_i$ (or minibatch) and computes the local gradient $\nabla\ell(W_{t},Z^i_t)$ based on the received global model $W_t$. Since a greater transmit power at the edge server than at devices,  downlink channels  are assumed to be noise-free in CFL~\cite{DBLP:journals/twc/AmiriG20}. Then all the devices upload  updated  models $W^i_{t}=W_t-\eta_t\nabla\ell(W_{t},Z^i_t)$ to the server via the AWGN channel. The server aggregates received models  to update the global model following:
\begin{equation}\label{CFLrule}
W_{t+1}=\frac{1}{N}\sum_{i=1}^NW^i_{t}+\epsilon_t=W_t-\frac{\eta_t}{N}\sum_{i=1}^N \nabla\ell(W_{t},Z^i_t)+\epsilon_t,
\end{equation}
where $\eta_t$ is the learning rate at $t$-th round. $\epsilon_t \sim \mathcal{N}(0,\sigma^2I_d)$ denotes the isotropic Gaussian noise. $I_d$ is an identity matrix and $d$ is the dimension of parameters vector and noise.

\subsubsection{Decentralized Federated Learning}
Then we move to introduce the details of DFL considered in this paper. We assume the decentralized optimization is  associated with a matrix $\boldsymbol{\Theta}=[\theta_{i,j}]\in\mathbb{R}^{N\times N}$ , which
is constructed by  devices rooted on a specific graph structure.  More specifically, we consider the graph $\mathcal{G}=(\mathcal{V},\mathcal{E})$ with vertex set $\mathcal{V}=\{1,2,...,N\}$ and edge set $\mathcal{E}\subset \mathcal{V} \times \mathcal{V} $ with edge $(i,j)\in\mathcal{E} $ representing the communication link between device $i$ and $j $. 
\begin{comment}
Now we provide the formal definition of the doubly stochastic matrix.
    \begin{definition}[doubly stochastic matrix]~\cite{sun2021stability}
For given $\mathcal{G}=(\mathcal{V},\mathcal{E})$, the doubly stochastic matrix $\boldsymbol{\Theta}=[\theta_{i,j}]\in\mathbb{R}^{N\times N}$ is defined on $\mathcal{V}$  satisfying: a) If $i\neq j $ and $(i,j) \notin \mathcal{E}$, then $\theta_{i,j}=0$; otherwise $\theta_{i,j}\geq 0$; b) $\boldsymbol{\Theta}=\boldsymbol{\Theta}^T$; c) $\mbox{null}\{I-\boldsymbol{\Theta}\}=\mbox{span}\{\mathbf{1}\}$; d) $I\succeq \boldsymbol{\Theta}\succeq -I$, where $\mathbf{A}\succeq \mathbf{B}$ means $\mathbf{A}-\mathbf{B}$ is  positive semi-definite. 
\end{definition}
\end{comment}

We further consider $\boldsymbol{\Theta}$ is a doubly stochastic matrix, satisfying: a) If $i\neq j $ and $(i,j) \notin \mathcal{E}$, then $\theta_{i,j}=0$; otherwise $\theta_{i,j}> 0$; b) $\boldsymbol{\Theta}=\boldsymbol{\Theta}^T, \boldsymbol{\Theta}\mathbf{1}_N=\mathbf{1}_N$, and ${\mathbf{1}_N}^T\boldsymbol{\Theta}={\mathbf{1}_N}^T$, where ${\mathbf{1}_N} \in \mathbb{R}^N$ is the vector of all ones. A significant value that characterizes $\boldsymbol{\Theta}$ is $\lambda:=\max\{|
\lambda_2(\boldsymbol{\Theta})|,|\lambda_N(\boldsymbol{\Theta})|\}$, where $|\lambda_i(\boldsymbol{\Theta})|$ denotes $i$-th largest eigenvalue of $\boldsymbol{\Theta}$.  $ 0\leq \lambda <1$ since the definition of the doubly stochastic matrix and $\lambda=0$ indicates that the communication topology is fully-connected.

Referring to~\cite{sun2021stability}, the local training procedure of devices in DFL follows: 1) At the $t$-th round, device $i$ leverages the local model $W_t^i$ to compute the local gradient $\nabla\ell(W_t^i,Z_t^i)$ similar to CFL. 2) device  $i$ replaces its local models with the weighted average of $\sum_{j=1}^N\theta_{i,j}W^j_{t}$ over AWGN channels. 3) Then device $i$ updates its model $W_{t+1}^i$ following:
\begin{equation}\label{DFLrule}
W_{t+1}^i=\sum_{j=1}^N\theta_{i,j}W^j_{t}-\eta_t\nabla\ell(W^i_{t},Z^i_t)+\epsilon_t^i,
\end{equation}
where $\theta_{i,j}$ is the element of  the doubly stochastic matrix $\boldsymbol{\Theta}$ and $\epsilon_t^i \sim \mathcal{N}(0,\sigma_i^2I_d)$ denotes the isotropic Gaussian noise.  

In this paper, we only consider the generalization performance of the average model $W_t=\frac{1}{N}\sum_{i=1}^N W_{t}^i$ in DFL. Besides, we consider four specific graph structures used as the communication topologies and they are shown in Fig.~\ref{fig:topology}.
\begin{figure}[t]
\captionsetup[subfloat]{font=tiny}	
\centering
	\subfloat[Complete]{\includegraphics[width = 0.08\textwidth]{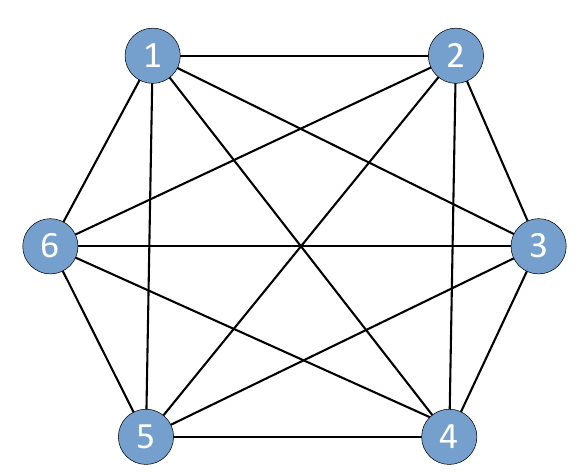}}
	\hspace{0.6cm}\subfloat[Ring]{\includegraphics[width = 0.08\textwidth]{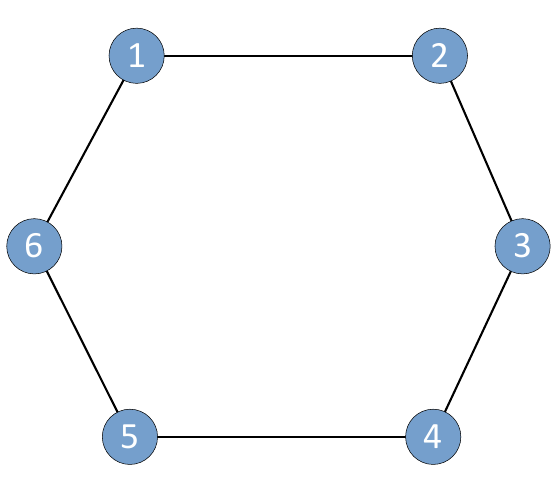}}
 	\hspace{0.6cm}\subfloat[Star]
 {\includegraphics[width = 0.08\textwidth]{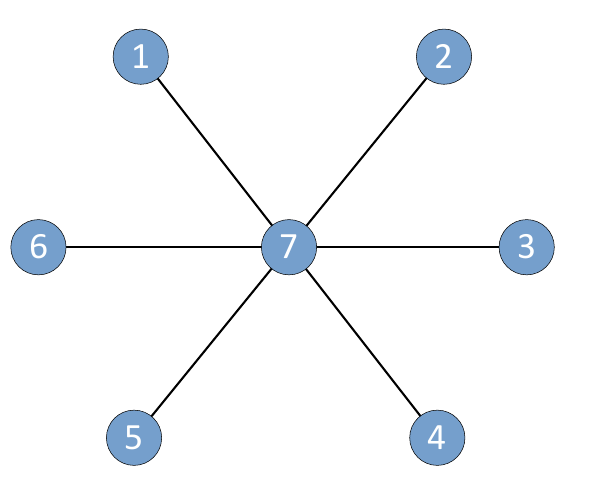}}
	\hspace{0.6cm} \subfloat[Random]{\includegraphics[width = 0.08\textwidth]{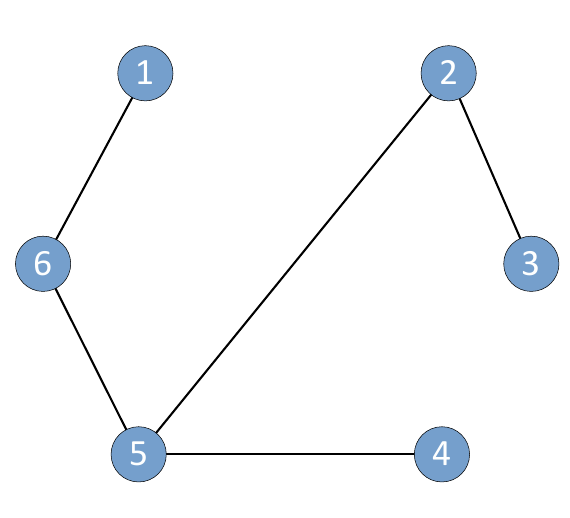}}
\caption{Communication topology for DFL considered in this paper.}
\label{fig:topology}
\vspace{-0.5cm}
\end{figure}

\subsection{Generalization Analysis}
This section proposes the generalization  bounds for both CFL and DFL introduced in the previous section. The generalization analysis utilizes the below  assumptions widely used in the theoretical studies of FL.

\begin{assumption}[$L$-Lipschitz]\label{assumLip}
$\ell(w,z)$ is differentiable with respect to $w    $ and $L$-Lipschitz for every $z$, i.e. 
, for any $ w_1,w_2\in \mathcal{W}$, $|\ell(w_1,z)-\ell(w_2,z)|\leq L\Vert w_1-w_2 \Vert$, for a constant $L \geq 0$.
\end{assumption}
\begin{assumption}[Bounded variance]\label{assumpbounded}
For any $w\in \mathcal{W}$, the variance of the stochastic gradient is bounded by $\xi_i$, i.e., $\mathbb{E}_{Z_i \sim \mu_i}\Vert\nabla\ell(Z_i,w) -\nabla\mathcal{L}_{\mu_i}(w)\Vert \leq \xi_i$,
where $\mathcal{L}_{\mu_i}(w)=\mathbb{E}_{Z_i \sim \mu_i}[\ell(Z_i,w)]$ denotes the expected risk of $\mu_i$.
\end{assumption}

These assumptions are widely used in the theoretical analysis of FL~\cite{DBLP:conf/iclr/HuLL23, DBLP:journals/corr/abs-2306-03824,DBLP:journals/entropy/BarnesDP22}. Under these assumptions, we  present the information-theoretic generalization bound for both CFL and DFL over AWGN channels below.  The proof sketch is provided alongside our theoretical results  and the detailed theoretical proofs are included in the appendix.  

\begin{theorem}[Generalization bound for CFL over AWGN channels.] \label{thm1} Let $W$ be the final iterate $W_T$ of sequence  $W^{(T)}=(W_1,W_2,...,W_T)$ generated by CFL over AWGN channels based on Eq.~\eqref{CFLrule}. The generalization bound of such $W$ is bounded as,
\begin{equation}\label{eq:thm1}
    \begin{aligned}
          |gen_{CFL}(\mu,P_{W|S})|^2   
        &\leq \sum_{t=1}^T \frac{R^2\eta_t^2}{\sigma^2N^3n}\sum_{i=1}^N\Big[\xi_i^2+L^2(4D_i^2 +1)\Big],
    \end{aligned}
\end{equation}
where  $D_i=\frac{1}{2}\int_{z \in \mathcal{Z}}|d\mu_i(z)-\frac{1}{N}\sum_{i=1}^Nd\mu_i(z)|$ is the total variation of distributions $\mu_i$ and $\frac{1}{N}\sum_{i=1}^N\mu_i$, measuring the degree of data heterogeneity.
\end{theorem}

 \emph{Proof sketch and Remark for  Theorem~\ref{thm1}:}
We utilize the independence of $\{S_i\}_{i=1}^N$, the data processing inequality $I(W_t;S_i)\leq I(W_t;Z_t^i)$ and the Markov structure of $\{W_t\}_{t=1}^T$ to upper-bound the quantity $\sum_{i=1}^NI(W;S_i)$ in Lemma~\ref{lemma:GFL} as $\sum_{t=1}^TI(W_{t+1};(Z_t^i)_{i=1}^N|W_t)$. By employing the update rule of CFL and the property of AWGN channels, we derive the result in Eq.~\eqref{eq:thm1}. Theorem~\ref{thm1} demonstrates that the data heterogeneity  $D_i$ deteriorates the generalization. Besides, a certain degree of channel noise $\epsilon_t$ with variance $\sigma^2$    is beneficial for generalization as it introduces randomness during training, thereby avoiding overfitting. However,  excessive noise degenerates the training convergence and affects the performance~\cite{DBLP:journals/jsac/XingSB21}.

Next, we proceed to introduce Theorem~\ref{thm2}, which presents the generalization bound of the average model $W_t=\frac{1}{N}\sum_{i=1}^NW_t^i$ trained by  DFL over AWGN channels.
\begin{theorem}[Generalization bound for DFL over AWGN channels.] \label{thm2} Let $W$ be the final average model $W_T=\frac{1}{N}\sum_{i=1}^N W_T^i$ of sequence  $W^{(T)}=(W_1,W_2,...,W_T)$ generated by  DFL over AWGN channels based on Eq.~\eqref{DFLrule}. The generalization bound of this $W_T$ is bounded as,
\begin{equation}\label{eq:thm2}
\begin{aligned}
         |gen_{DFL}(\mu,P_{W|S})|^2 &\leq \sum_{t=1}^T \Big[\frac{2R^2\lambda_{T}\mathcal{V}\sum_{k=1}^{t}\eta_{t-k}^2}{\sigma^2N^3n}\\
     &\quad +\frac{2R^2\lambda_{T}d}{N^3n}+\frac{2R^2\mathcal{V}\eta_t^2}{\sigma^2Nn}\Big],
\end{aligned}
\end{equation}
where   $\lambda_T$ denotes $\max_{t\in[T]}\sum_{k=1}^t\lambda^{2k}$, $\sigma^2=\sum_{i=1}^N\sigma_i^2$ denotes the total effect from channel noises and $\mathcal{V}$ represents the term $\sum_{i=1}^N[\xi_i^2+L^2(4D_i^2+1)]$, measuring the total effect  from  the variance of stochastic gradients and the data heterogeneity.
\end{theorem}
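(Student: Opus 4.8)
The plan is to reduce the statement to Lemma~\ref{lemma:GFL} and then control each $I(S_i;W_T)$ round by round, tracking how the mixing matrix $\boldsymbol{\Theta}$ spreads the dependence on a single client's data $S_i$ across the whole network. By Lemma~\ref{lemma:GFL} it suffices to upper bound $\sum_{i=1}^N I(S_i;W_T)$ for the averaged iterate $W_T=\frac1N\sum_i W_T^i$. Writing the DFL recursion~\eqref{DFLrule} in matrix form $\mathbf{W}_{t+1}=\boldsymbol{\Theta}\mathbf{W}_t-\eta_t\mathbf{G}_t+\boldsymbol{\mathcal{E}}_t$ (rows indexed by devices, $\mathbf{G}_t$ collecting the stochastic gradients and $\boldsymbol{\mathcal{E}}_t$ the channel noises) and using $\mathbf{1}_N^\top\boldsymbol{\Theta}=\mathbf{1}_N^\top$, the average satisfies $W_{t+1}=W_t-\frac{\eta_t}{N}\sum_{l=1}^N\nabla\ell(W_t^l,Z_t^l)+\bar{\epsilon}_t$ with $\bar{\epsilon}_t=\frac1N\sum_l\epsilon_t^l\sim\mathcal{N}(0,\tfrac{\sigma^2}{N^2}I_d)$ and $\sigma^2=\sum_i\sigma_i^2$. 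I would then apply the chain rule along the trajectory $W^{(T)}=(W_1,\dots,W_T)$, namely $I(S_i;W_T)\le I(S_i;W^{(T)})=\sum_{t}I(S_i;W_{t+1}\mid W^{(t)})$; conditionally on $W^{(t)}$ the iterate $W_{t+1}$ is a Gaussian perturbation of variance $\sigma^2/N^2$ of the aggregated gradient term, so the usual Gaussian-channel estimate bounds each summand by $\tfrac{N^2}{2\sigma^2}$ times the conditional second moment of the part of that term which still carries information about $S_i$.

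The step that departs from Theorem~\ref{thm1} is isolating that part, and this is where the topology enters. The round-$t$ signal $\frac{\eta_t}{N}\sum_l\nabla\ell(W_t^l,Z_t^l)$ depends on $S_i$ both through device $i$'s fresh sample $Z_t^i$ --- the ``direct'' contribution, which mirrors the CFL argument and produces the $\lambda$-free term $\tfrac{2R^2\mathbb{V}(t)}{\sigma^2Nn}$ --- and through every local model $W_t^l$, each of which has absorbed $S_i$-dependence over the previous rounds. To quantify the latter I would unroll $\mathbf{W}_t=\sum_{k}\boldsymbol{\Theta}^{\,t-1-k}(-\eta_k\mathbf{G}_k+\boldsymbol{\mathcal{E}}_k)$ and use the spectral-gap bound $\|\boldsymbol{\Theta}^{m}-\tfrac1N\mathbf{1}_N\mathbf{1}_N^\top\|_2\le\lambda^{m}$: a stochastic gradient (respectively a channel noise) injected $k$ rounds earlier reaches the round-$t$ update filtered by $\boldsymbol{\Theta}^{k}-\tfrac1N\mathbf{1}_N\mathbf{1}_N^\top$, and hence contributes only a $\lambda^{2k}$-weighted amount to the relevant conditional variance. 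Collecting these contributions gives the two inner sums over $k$ in~\eqref{eq:thm2}, $\sum_{k}\lambda^{2k}\mathbb{V}(t-k)$ from past stochastic gradients and $\sum_k\lambda^{2k}d$ from past channel noises (the $d$ being the trace of a Gaussian covariance).

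For the second moments appearing above I would use Assumptions~\ref{assumLip}--\ref{assumpbounded} together with the heterogeneity term: decomposing $\nabla\ell(w,Z_i)=\bigl(\nabla\ell(w,Z_i)-\nabla\mathcal{L}_{\mu_i}(w)\bigr)+\bigl(\nabla\mathcal{L}_{\mu_i}(w)-\nabla\mathcal{L}_{\bar\mu}(w)\bigr)+\nabla\mathcal{L}_{\bar\mu}(w)$ with $\bar\mu=\frac1N\sum_j\mu_j$, and bounding $\|\nabla\mathcal{L}_{\mu_i}(w)-\nabla\mathcal{L}_{\bar\mu}(w)\|=\bigl\|\int\nabla\ell(w,z)\,(d\mu_i-d\bar\mu)(z)\bigr\|\le 2LD_i$ together with $\|\nabla\mathcal{L}_{\bar\mu}(w)\|\le L$, one obtains $\mathbb{E}\|\nabla\ell(w,Z_i)\|^2\le\xi_i^2+L^2(4D_i^2+1)$, i.e.\ exactly the summand of $\mathbb{V}(t)$; this is also the place where $D_i$ and its interpretation as a total-variation distance come from. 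Summing the per-round estimates over $t$, the per-device estimates over $i$, and substituting into Lemma~\ref{lemma:GFL} gives~\eqref{eq:thm2}. The hard part is the mixing-matrix step: pinning down how much of the variability of the aggregated gradient signal is genuinely attributable to $S_i$ after it has been repeatedly averaged through $\boldsymbol{\Theta}$, and extracting the geometric $\lambda^{2k}$ decay in closed form. Along the way one must also be careful about what is conditioned on in the chain rule so that each increment really is a memoryless Gaussian-channel output, and about propagating perturbations through the gradient maps using only the Lipschitz bound $\|\nabla\ell\|\le L$ rather than smoothness.
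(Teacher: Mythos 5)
Your overall route is the same as the paper's: reduce to Lemma~\ref{lemma:GFL}, apply the chain rule along the trajectory, treat each round as a Gaussian channel whose per-step information is at most $\tfrac{N^2}{2\sigma^2}$ times an expected squared distance (the averaged noise $\bar\epsilon_t\sim\mathcal{N}(0,\tfrac{\sigma^2}{N^2}I_d)$ is exactly what the paper uses), unroll the local iterates through $\boldsymbol{\Theta}$ and invoke $\Vert\boldsymbol{\Theta}^k-\mathbf{P}\Vert\le\lambda^k$ to get the geometric $\lambda^{2k}$ sums, and bound the gradient second moments by $\xi_i^2+L^2(4D_i^2+1)$ via the same three-term decomposition (your $\nabla\mathcal{L}_{\bar\mu}$ is the paper's $G_t$, and your total-variation bound is its Lemma~\ref{lemmTV}). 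You also correctly attribute the $\lambda$-free term to the fresh sample and the two $\lambda^{2k}$ sums to past gradients and past channel noises.

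The one step you leave unresolved --- and explicitly flag as ``the hard part'' --- is precisely where the paper supplies a concrete device that your sketch lacks. You propose to bound each increment by the conditional second moment of ``the part of the aggregated gradient which still carries information about $S_i$,'' but conditional mutual information does not split into a ``direct'' and an ``indirect'' part the way a variance does, and conditioning on the averaged history $W^{(t)}$ does not pin down the individual local models $W_t^l$, so that quantity is not well defined as stated. The paper resolves this with the ghost-iterate construction of~\cite{DBLP:conf/colt/Neu21}: it builds an auxiliary trajectory $\Tilde{W}_{t+1}^i=\sum_j\theta_{i,j}\Tilde{W}_t^j-\eta_t\nabla\ell(\Tilde{W}_t^i,\Tilde{Z}_t^i)+\Tilde\epsilon_t^i$ driven by independently resampled data and noise, rewrites the conditional mutual information as $D_{kl}\big(P_{W_{t+1}\mid W^{(t)},(Z_t^i)_i}\,\big\Vert\,P_{\Tilde W_{t+1}\mid W^{(t)}}\big)$, and then applies the smoothing inequality $D_{kl}(P_{X+\epsilon}\Vert P_{Y+\epsilon})\le\tfrac{1}{2\sigma^2}\mathbb{E}\Vert X-Y\Vert^2$ (its Lemma~\ref{lemma1}). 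The squared distance between the real and ghost trajectories is what actually decomposes --- by the triangle inequality --- into the consensus-error terms $\Vert(\boldsymbol{\Theta}-\mathbf{P})\mathbf{W}_t\Vert_F^2$, $\Vert(\boldsymbol{\Theta}-\mathbf{P})\Tilde{\mathbf{W}}_t\Vert_F^2$ and the fresh-gradient term, after which your spectral-gap unrolling (together with $W_0=\mathbf{0}$) goes through exactly as you describe. So the skeleton is right, but to close the argument you need to name and use this auxiliary-distribution comparison rather than an informal attribution of variance to $S_i$.
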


 \emph{Proof sketch and Remark for Theorem~\ref{thm2}:}
Following~\cite{DBLP:conf/colt/Neu21}, we convert $I(W_{t+1};\{Z_t^i\}_{i=1}^N|W^{(t)})$ into the relative entropy between the conditional distribution $P_{W_{t+1}|W^{(t)},(Z_t^i)_{i=1}^N}$ of $W_{t+1}$  and $P_{\Tilde{W}_{t+1}|W^{(t)}}$ of $\Tilde{W}_{t+1}$, a ``ghost" iterate  trained by virtual datasets $\{\Tilde{S}_i\}_{i=1}^N$ independently sampled from $\{\mu_i\}_{i=1}^N$. We further leverage the update rule of DFL and the properties of mixing matrices to derive the result in Eq.~\eqref{eq:thm2}. Theorem~\ref{thm2} implies that though decentralization decreases communication costs, it impairs the generalization since the sparsity of topologies leads to a larger $\lambda$. Eq.~\eqref{eq:thm2} also shows that the model scale $d$ exacerbates this deterioration, implying that exchanging more packets via D2D links worsens the generalization.

The trade-offs between the generalization error of DFL and the communication budget is crucial because of the limited bandwidth in D2D networks. Our focus thus lies in the interplay between the number of communication round $T$ and the generalization error $|gen_{DFL}(\mu,P_{W|S})|$ in Eq.~\eqref{eq:thm2} denoted as $\varepsilon$ below. If we set the learning rate $\eta_t$ as a constant $\eta$,  $\varepsilon$ then satisfies $ \varepsilon=\mathcal{O}\big(R\sqrt{\frac{\eta^2\mathcal{V}\lambda_T}{\sigma^2N^3n}T^2+(\frac{\eta^2\mathcal{V}}{\sigma^2Nn}+\frac{\lambda_Td}{N^3n})T}\big)$, indicating that $\varepsilon$ is accumulated  as $T$ increases. Moreover, both larger model scale $d$ involving more data packets to be exchanged, and sparser topologies with larger $\lambda_T$ lead to a rapid growth of $\varepsilon$. Additionally, setting $\eta_t=\frac{\eta}{t}$ yields $ \varepsilon=\mathcal{O}\big(R\sqrt{(\frac{\eta^2\mathcal{V}\lambda_T}{\sigma^2N^3n}+\frac{\lambda_Td}{N^3n})T+\frac{\eta^2\mathcal{V}}{\sigma^2Nn}}\big)$, signifying that  $\varepsilon$ increases more gradually with  decreasing  $\eta_t$.

%albeit potentially requiring more frequent communication for training convergence. 
\begin{comment}
    The trade-off between the convergence and communication cost is vital in DFL since the adoption of D2D communications. Hence, we further focus on the gap of integrating the generalization bound in Theorem~\ref{thm2} and convergence bound proposed in~\cite{sun2021stability}: $\mathcal{L}_{DFL}(\mu,P_{W|S}):=gen_{DFL}(\mu,P_{W|S})+\mathbb{E}[\frac{1}{N}\sum_{i}\mathcal{L}_{S_i}(W_T)-\frac{1}{N}\sum_{i}\mathcal{L}_{S_i}(W^*)]$ ($W^*$ is the  minimizer on the global empirical risk). With the assumption of $\beta$-smooth~\cite{sun2021stability} and $W$ is bounded by $r$, we upper bound this gap in the convex case as follows,
\begin{equation}\label{eq:thm3}
\begin{aligned}
        & |\mathcal{L}_{DFL}(\mu,P_{W|S})|^2 \\
         &\leq \sum_{t=1}^T \Big[\sum_{k=1}^{t}\frac{2R^2\lambda^{2k}\mathcal{V}(t-k)}{\sigma^2N^3n}+ \sum_{k=1}^{t}\frac{\lambda^{2k}2R^2d}{N^3n}+\frac{2R^2\mathcal{V}(t)}{\sigma^2Nn}]\\
         &\quad+\frac{Nr+2\sum_{t=1}^T\Tilde{\mathcal{V}}(t)}{N\sum_{t=1}^T\eta_t}+M^2(T)\sum_{t=1}^T[8r\beta\Tilde{\mathcal{V}}(t)+2\lambda^2\Tilde{\mathcal{V}}(t)],
\end{aligned}
\end{equation}
where $M(T)$ denotes $\max_{t\in[T]}\sum_{k=1}^t\lambda^k$ and $\Tilde{\mathcal{V}}(t)$ denotes $\sum_{i=1}^N\eta_t^2(\xi_i^2+L^2(4D_i^2+1))+\sigma_t^2$. If we fix the learning rate as $\eta$, to obtain the approximate error $\epsilon$ of this gap, the number of communication round is $\mathcal{O}(\epsilon)$.
\end{comment}

\section{Method}
In this section, we devise a novel federated optimization method to reduce the information-theoretic generalization bound in Lemma~\ref{lemma:GFL} for improving the  performance of FL over noisy channels. Recently,  a PAC-Bayes information bottleneck framework  was introduced in~\cite{DBLP:conf/iclr/0008HKSC022} to approximately reduce $I(W;S)$  in the information-theoretic generalization bound via optimizing the loss function with a regularization term $[\nabla\ell(W,Z_i)^T(W-W_0)]^2, Z_i \in S$ by stochastic gradient langevin dynamics (SGLD), where $W_0$ is a prior model. Motivated by this approach, we extend this concept and  propose a federated optimization method called Federated Global Mutual Information Reduction (FedGMIR) to reduce the generalization error of  FL over noisy channels via approximately decreasing the global mutual information $\frac{1}{N}\sum_{i=1}^N I(S_i;W)$ included in the generalization bound.

Now we introduce the details of the  proposed method formally: Each device modifies its local optimization objectives as:  $\ell(Z^i_t,W)+\beta \big[\nabla\ell(Z^i_t,W)^T(W-W_0)\big]^2$, where $\beta$ is a balanced factor and $Z^i_t$ represents the minibatch of $i$-th local dataset in $t$-th round.  $W_0$ is set as the initial model. Based on the modified local objectives, the update rules of the considered models of CFL and DFL are thus converted as:
\begin{equation}\label{IB}
\begin{aligned}
W_{t+1}&=W_t-\frac{\eta_t}{N}\sum_{i=1}^N \nabla_{W_{t}}\big[\ell(Z^i_t,W_t)+\beta R(W_t,Z_t^i)\big]+\epsilon_t\\
 W_{t+1}&=\frac{1}{N}\sum_{i=1}^N\Big\{\sum_{j=1}^N\theta_{i,j}W^j_{t}-\eta_t \nabla_{W^i_{t}}\big[\ell(W^i_{t},Z^i_t)\\
 &\quad+\beta  R(W_t^i,Z_t^i)\big]+\epsilon_t^i\Big\},
\end{aligned}
\end{equation}
where $R(W_t,Z_t^i)=\big[\nabla\ell(Z^i_t,W_t)^T(W_t-W_0)\big]^2$ and $R(W_t^i,Z_t^i)=\big[\nabla\ell(Z^i_t,W_t^i)^T(W_t^i-W_0)\big]^2$. The channel noise $\epsilon_t$ and $\epsilon_t^i$ can help to attain the optimal  posterior inference $P_{W|S}$ similar to the SGLD algorithm used in~\cite{DBLP:conf/iclr/0008HKSC022}. Intuitively, to minimize  regularization terms in Eq.~\eqref{IB} implies making learned models  orthogonal to gradients for reducing the dependency between models and training samples, since calculated gradients incorporate information from local datasets.  FedGMIR can thus enhance the generalization  of models via  implicitly reducing the global mutual information $\frac{1}{N}\sum_{i=1}^N I(S_i;W)$ in the generalization bound in Lemma~\ref{lemma:GFL}.

%Intuitively, to minimize these regularization terms $R(W_t,Z_t^i)$ and $R(W_t^i,Z_t^i)$ in Eq.~\eqref{IB} implies reducing the dependency between learned models and training sample, since the computed gradients incorporate information from local datasets.  The proposed update rules in Eq.~\eqref{IB} thus have the capability to enhance the generalization performance of models via  implicitly reducing the global mutual information $\frac{1}{N}\sum_{i=1}^N I(S_i;W)$ included in the generalization bound.calculating the square of the inner product between models and local gradients 

\begin{comment}
    \begin{figure*}[ht]
\captionsetup[subfloat]{font=tiny}	
\centering
\subfloat[ Level of heterogeneity for CFL]{\includegraphics[width = 0.23\textwidth]{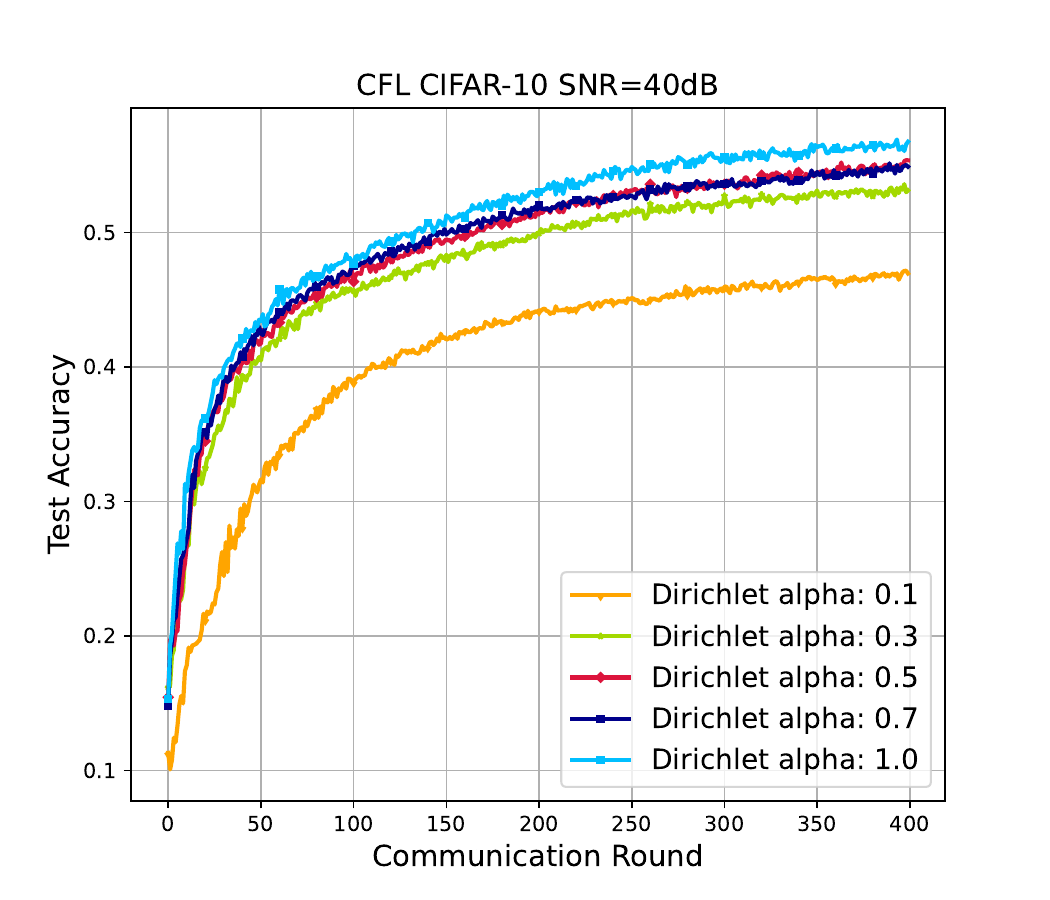}}
   	\subfloat[ Level of heterogeneity for DFL]{\includegraphics[width = 0.23\textwidth]{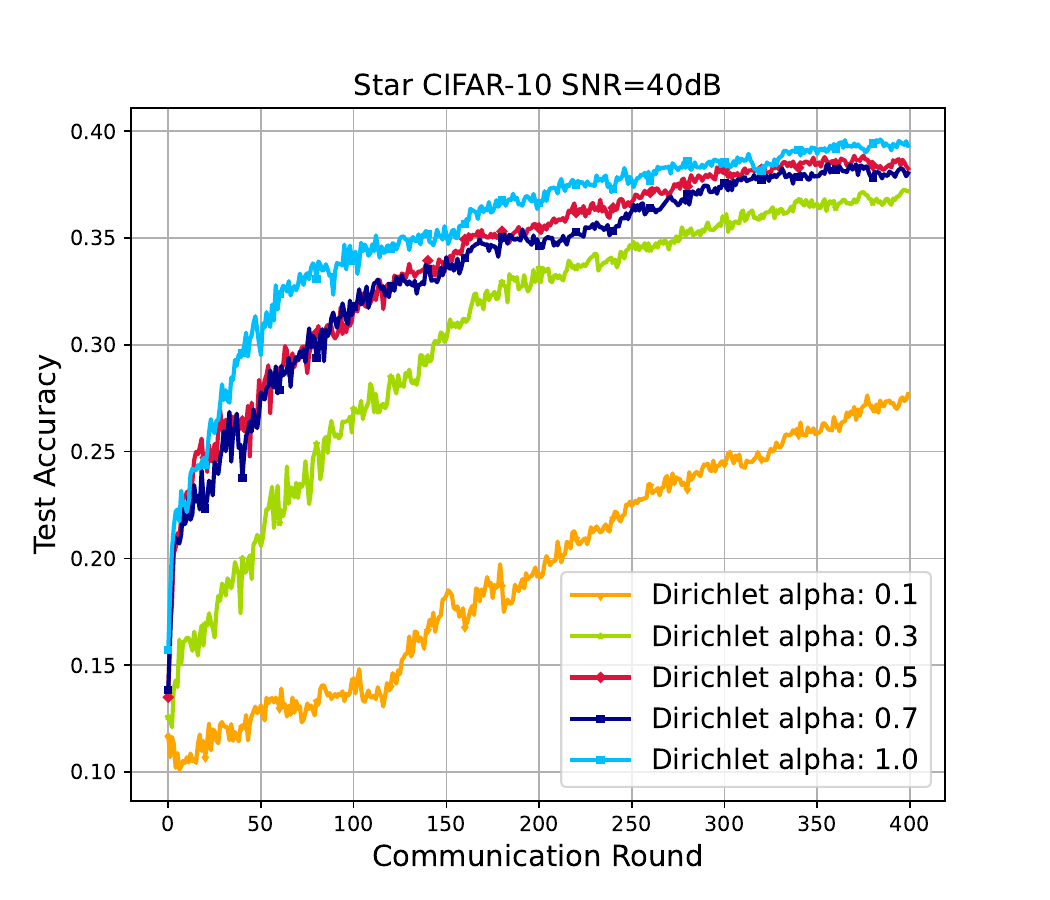}}
	\subfloat[Communication topology]{\includegraphics[width = 0.23\textwidth]{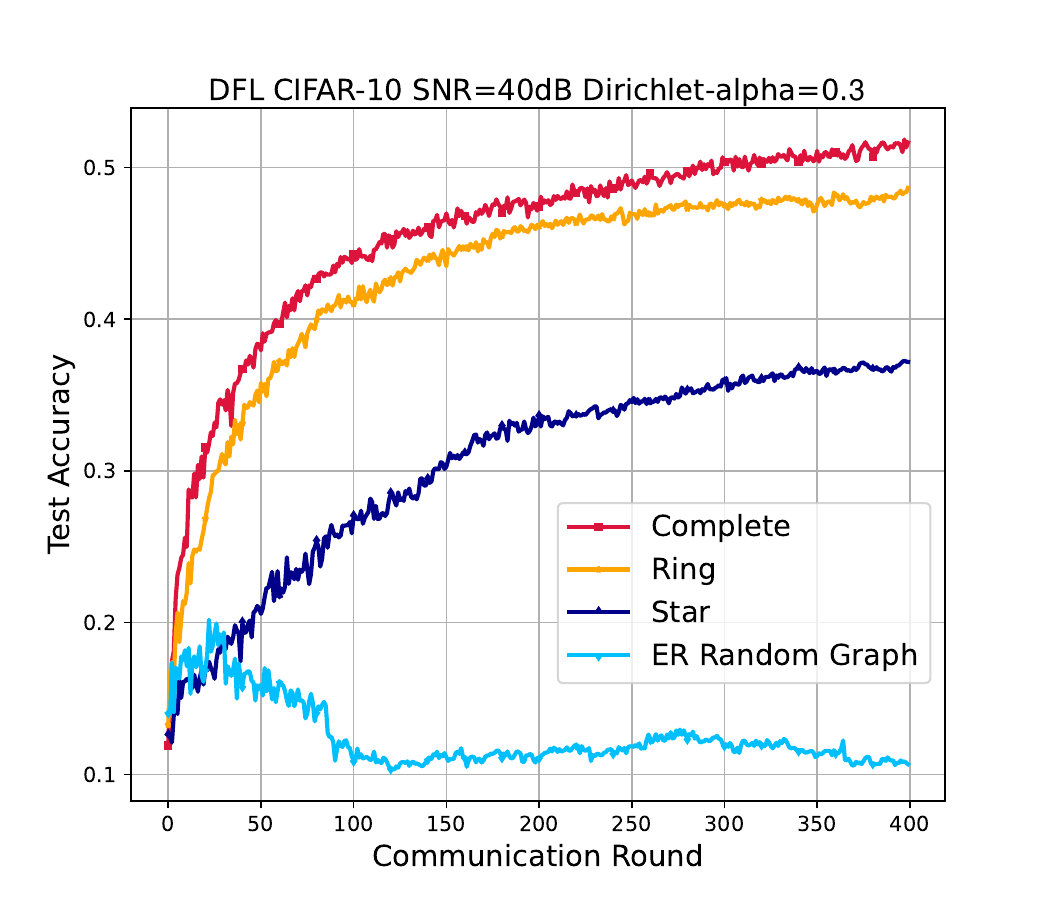}}
 \subfloat[Signal-to-noise ratio]{\includegraphics[width = 0.23\textwidth]{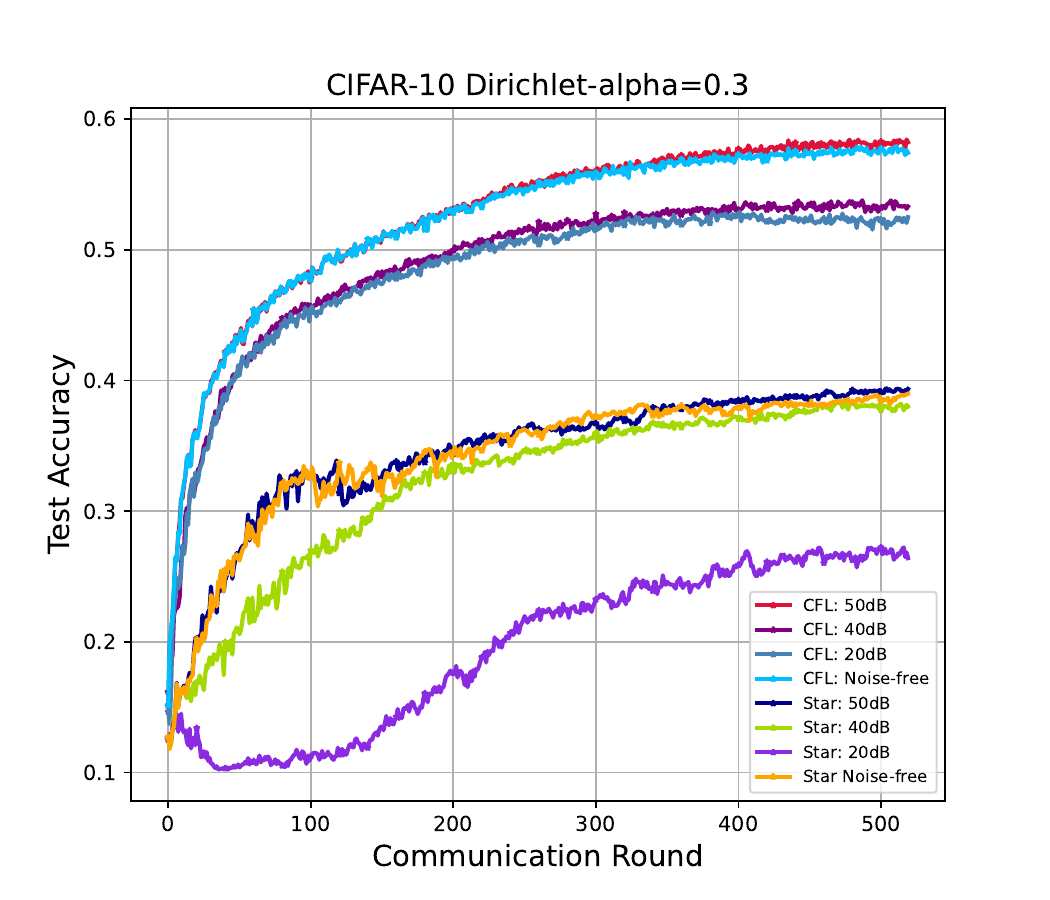}}
\caption{Test accuracy of different effects on the generalization bound for CIFAR-10 dataset.}
\label{fig:effect}
\end{figure*}
\end{comment}

\section{Numerical Results}   
In this section, we conduct numerical experiments to evaluate the  performance of the proposed FedGMIR  with  typical FL methods on two widely used image classification datasets in FL, namely FMNIST and CIFAR-10 respectively~\cite{DBLP:conf/icde/LiDCH22}. Additionally, we numerically verify our theoretical results of the generalization of FEEL on CIFAR-10.

The experimental setup is described below.  The number of devices is set to $50$.    Multi-layer perceptron (MLP) and Convolutional Neural Network (CNN) are selected  as the backbone models for FMNIST and CIFAR-10 respectively.   Devices perform SGD to update local models with a mini-batch size of $128$ and we tune the learning rate over $\{0.001,0.01,0.1\}$.   For modeling the data heterogeneity, we distribute the datasets to devices  using the Dirichlet distribution division widely used in Non-IID FL, where a smaller parameter $\alpha$ of Dirichlet distribution means a higher degree of heterogeneity~\cite{DBLP:conf/icde/LiDCH22}. Four graph structures illustrated in Fig.~\ref{fig:topology} are selected as the topologies in  DFL  for investigating the effects of the structure of topology. We employ the Erdős-Rényi random graph to construct the random graph~\cite{DBLP:conf/nips/LeiYHCS20}, with an edge creation probability set to  $0.01$. We test the global model or the average model on the test set following the same distribution as that of all the devices.  The corresponding test accuracy achieved by the model is considered as the measure of generalization performance.

    \begin{figure*}[ht]


\captionsetup[subfloat]{font=scriptsize}	
\centering
\vspace{-0.6cm}
\subfloat[ Data heterogeneity in CFL]{\includegraphics[width = 0.2\textwidth]{Figure/InFoGen_dirichlet_alpha_CFL.pdf}}
\hspace{0.4cm}
   	\subfloat[ Data heterogeneity in DFL]{\includegraphics[width = 0.2\textwidth]{Figure/InFoGen_Dirichlet_alpha.pdf}}
    \hspace{0.4cm}
	\subfloat[Communication topology]{\includegraphics[width = 0.2\textwidth]{Figure/InFoGen_TOPOLOGY_new.pdf}}
 \hspace{0.4cm}
 \subfloat[Signal-to-noise ratio]{\includegraphics[width = 0.2\textwidth]{Figure/InFoGen_SNR_new.pdf}}
\caption{Test accuracy of different effects on the generalization of FL for CIFAR-10.}
\vspace{-0.4cm}
\label{fig:effect}
\end{figure*}

Fig.~\ref{fig:effect} illustrates the convergence  of test accuracy on CIFAR-10 considering various effects included in the proposed generalization bound. We first examine the impact of data heterogeneity   in CFL and DFL scenarios, while keeping the signal-to-noise ratio (SNR) at $40$dB and selecting the star topology as the  topology in DFL. Consistent with our theoretical findings rooted on the proposed generalization bound,  increasing the data heterogeneity $D_i$  negatively affects generalization, as demonstrated in Fig~\ref{fig:effect} (a) and~\ref{fig:effect} (b). Next, we focus on the effect of different topologies on generalization. As mentioned earlier, we employ four structures depicted in Fig.~\ref{fig:topology} as the topologies  in DFL and the values of $\lambda$ in Theorem~\ref{thm2}  indicate the sparsity of the corresponding matrices $\boldsymbol{\Theta}$, where $0=\lambda_{complete} <  \lambda_{ring} \leq \lambda_{star} <\lambda_{random} < 1$ in our settings. According to Theorem~\ref{thm2}, a smaller $\lambda$, implying a more densely connected  topology among devices, can reduce the generalization bound. Consequently, though decentralization can decrease the bandwidth cost over transmission channels, it adversely affects the generalization of DFL, as illustrated in Fig.~\ref{fig:effect} (c). Fig.~\ref{fig:effect} (d) demonstrates  the effect of SNR (noise scale $\sigma$) on the generalization. Based on our theoretical analysis, the introduced randomness of channel noise can improve the generalization to some extent. In Fig.~\ref{fig:effect} (d), the test accuracy of models under SNR$=55$dB  is slightly higher compared with performing FL without channel noise. Nonetheless, lower SNR such as $40$dB or $20$dB severely hinder the training process of FL, resulting in lower test accuracy.
\begin{figure}[ht]
\captionsetup[subfloat]{font=scriptsize}	
\vspace{-0.4cm}
\centering
	\subfloat[ CIFAR-10 CFL]{\includegraphics[width = 0.18\textwidth]{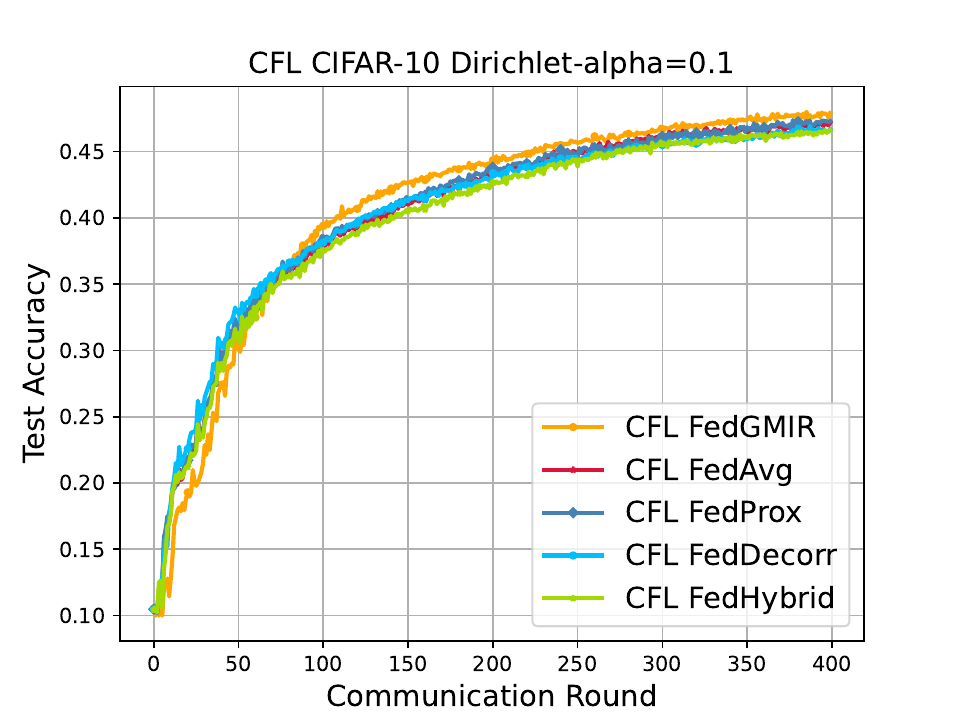}}
 \hspace{0.3cm}
 \subfloat[CIFAR-10 DFL]{\includegraphics[width = 0.18\textwidth]
 {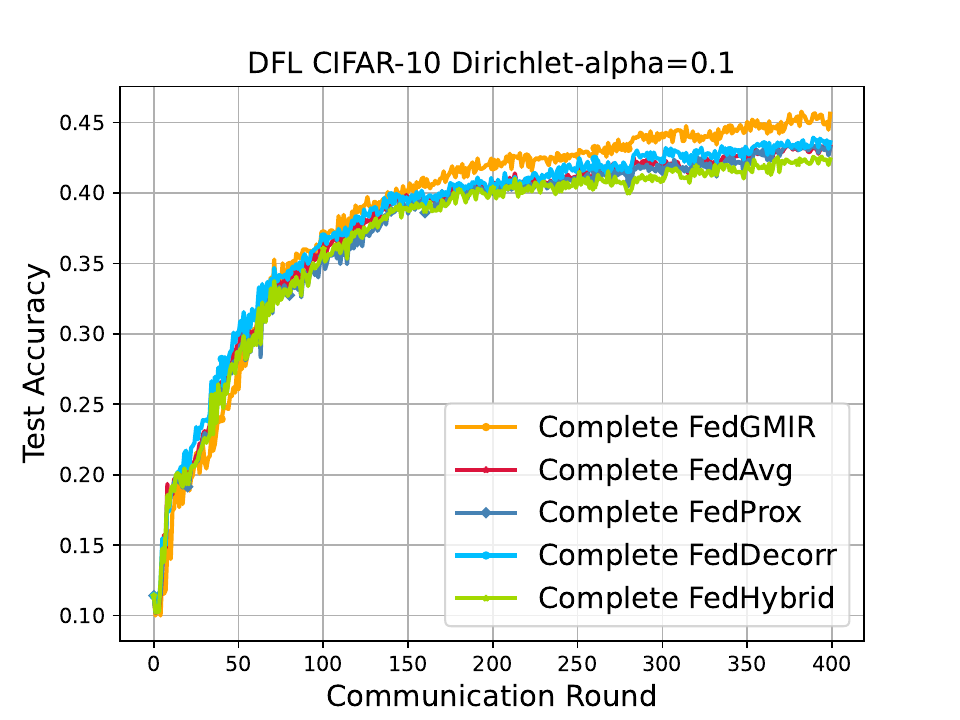}}
   \vspace{-0.05cm}
  \subfloat[FMNIST CFL]{\includegraphics[width = 0.18\textwidth]
 {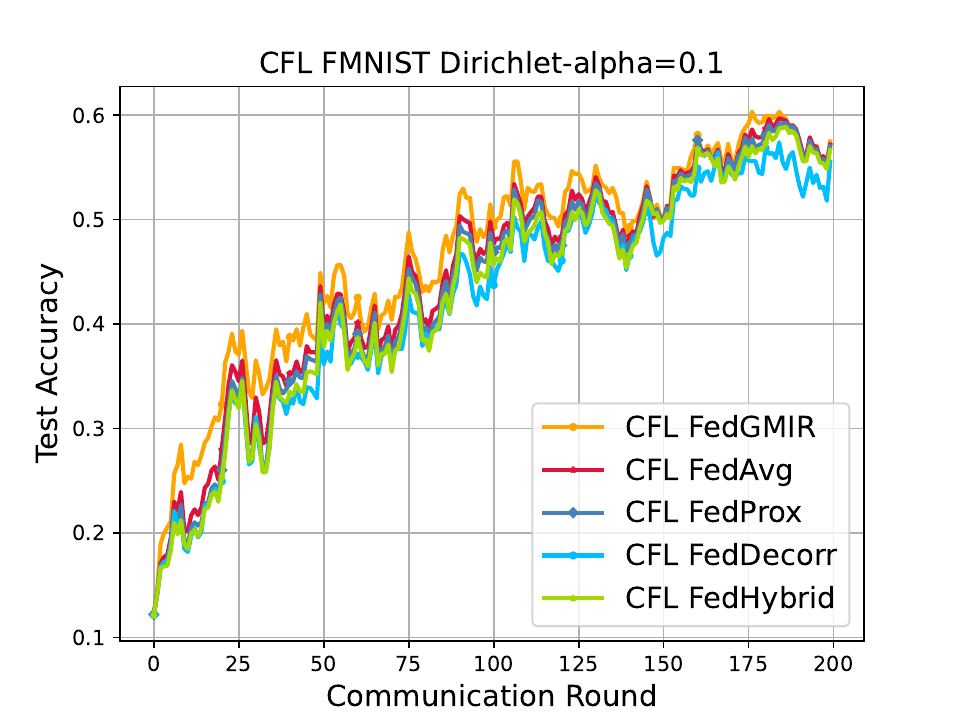}}
  \hspace{0.3cm}
  \subfloat[FMNIST DFL]{\includegraphics[width = 0.18\textwidth]
 {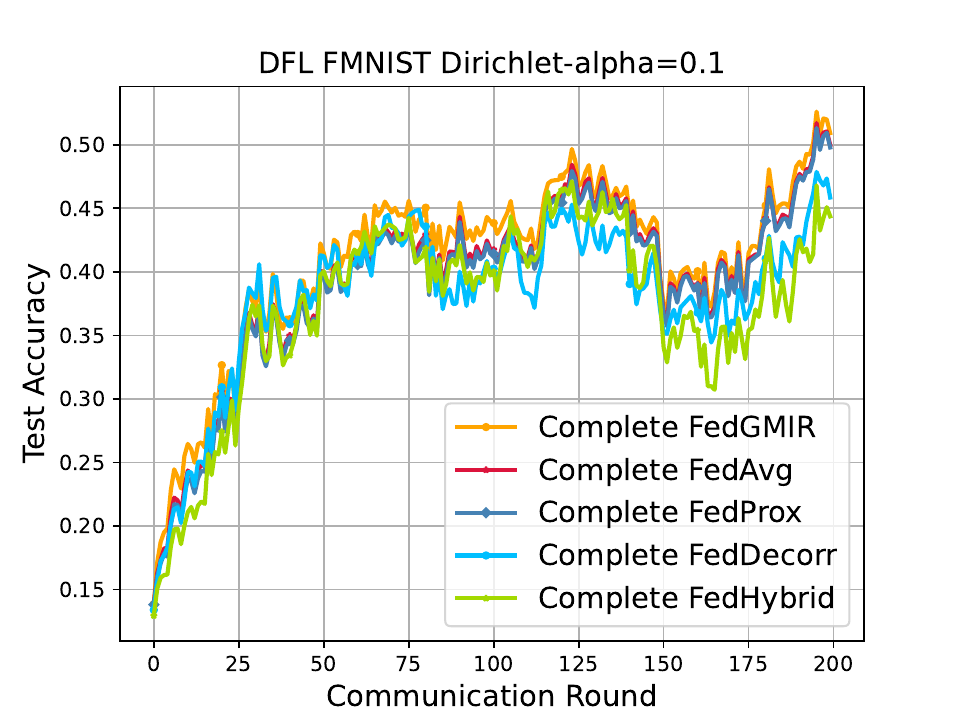}}
\caption{Test accuracy of different federated optimization methods.}
\label{fig:method}
\vspace{-0.2cm}
\end{figure}

We then conduct experiments to validate the effectiveness of FedGMIR on CIFAR-10 and FMNIST, presenting the results in Fig.~\ref{fig:method}. We select FedAvg~\cite{DBLP:conf/aistats/McMahanMRHA17}, FedProx~\cite{li2020federated}, FedDecorr~\cite{DBLP:conf/iclr/ShiLZTB23}, and Gradient-type FedHybrid~\cite{DBLP:journals/tsp/NiuW23} as baselines. We set the Dirichlet parameter $\alpha$  to 0.1,  and the SNR  to 40 dB, selecting the complete graph as the topology in DFL. In Fig.~\ref{fig:method}, FedGMIR outperforms other baselines in most rounds during the training process of CFL and DFL for the two datasets, suggesting that FedGMIR can enhance generalization by implicitly reducing the global mutual information effectively.

\section{Conclusion}
In this paper, we present information-theoretic generalization bounds for FEEL, covering CFL and topology-aware DFL scenarios in the presence of data heterogeneity and noisy channels. Our theoretical analysis reveals the effects of data heterogeneity, communication topology and channel noise on the generalization of FEEL. Besides, we develop a novel method called FedGMIR to enhance the generalization performance of models via implicitly reducing the global mutual information included in the generalization bounds. Numerical results align with our theoretical findings and demonstrate the effectiveness of the proposed FedGMIR method.

    \clearpage
    \appendix
    \subsection{Proof of Theorem~\ref{thm1}}
    
    \begin{lemma}\label{lemmTV}~\cite{DBLP:journals/corr/abs-2306-03824} Together with Assumption~\ref{assumLip}, $   \mathbb{E}\Vert \nabla\mathcal{L}_{\mu_i}(W_t)-G_t\Vert^2 \leq 4L^2D_i^2$, where $D_i=\frac{1}{2}\int_{z \in \mathcal{Z}}|d\mu_i(z)-\frac{1}{N}\sum_{i=1}^Nd\mu_i(z)|$ and $G_t=\frac{1}{N}\sum_{i=1}^N\nabla\mathcal{L}_{\mu_i}(W_t)$.
    \end{lemma}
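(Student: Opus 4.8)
The plan is to establish a pointwise-in-$w$ estimate, namely $\big\Vert\nabla\mathcal{L}_{\mu_i}(w)-\frac1N\sum_{j=1}^N\nabla\mathcal{L}_{\mu_j}(w)\big\Vert\le 2LD_i$ for every $w\in\mathcal{W}$, and then instantiate it at $w=W_t$. The first ingredient is the standard consequence of Assumption~\ref{assumLip}: since $\ell(\cdot,z)$ is differentiable and $L$-Lipschitz, its gradient is bounded, $\Vert\nabla\ell(w,z)\Vert\le L$ for all $w\in\mathcal{W}$ and $z\in\mathcal{Z}$. This same bound (via dominated convergence) justifies differentiating under the expectation, so that $\nabla\mathcal{L}_{\mu_i}(w)=\int_{\mathcal{Z}}\nabla\ell(w,z)\,d\mu_i(z)$ and, writing $\bar\mu:=\frac1N\sum_{j=1}^N\mu_j$, $G_t=\nabla\mathcal{L}_{\bar\mu}(W_t)=\int_{\mathcal{Z}}\nabla\ell(W_t,z)\,d\bar\mu(z)$ by linearity.

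Subtracting the integral representations, for any fixed $w$,
\[
\nabla\mathcal{L}_{\mu_i}(w)-\tfrac1N\sum_{j=1}^N\nabla\mathcal{L}_{\mu_j}(w)=\int_{\mathcal{Z}}\nabla\ell(w,z)\,\big(d\mu_i(z)-d\bar\mu(z)\big),
\]
which is the integral of a vector-valued map against the signed measure $\nu_i:=\mu_i-\bar\mu$. Using the Jordan decomposition $\nu_i=\nu_i^+-\nu_i^-$, the fact that $\mu_i$ and $\bar\mu$ are both probability measures forces $\nu_i^+(\mathcal{Z})=\nu_i^-(\mathcal{Z})=\tfrac12\int_{\mathcal{Z}}|d\mu_i(z)-d\bar\mu(z)|=D_i$, which is exactly the definition of $D_i$. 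The triangle inequality together with $\Vert\nabla\ell(w,\cdot)\Vert\le L$ then gives
\[
\Big\Vert\nabla\mathcal{L}_{\mu_i}(w)-\tfrac1N\sum_{j=1}^N\nabla\mathcal{L}_{\mu_j}(w)\Big\Vert\le\int_{\mathcal{Z}}\Vert\nabla\ell(w,z)\Vert\,d\nu_i^+(z)+\int_{\mathcal{Z}}\Vert\nabla\ell(w,z)\Vert\,d\nu_i^-(z)\le LD_i+LD_i=2LD_i.
\]
Squaring yields $\Vert\nabla\mathcal{L}_{\mu_i}(w)-G_t\Vert^2\le 4L^2D_i^2$ for every $w$; specializing to $w=W_t$ (a bound that holds deterministically on every sample path) and taking expectation over the randomness of $W_t$ preserves the inequality and proves the lemma.

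I do not anticipate a real obstacle; the only points needing care are measure-theoretic bookkeeping — justifying the interchange of $\nabla$ and the integral, and invoking the Jordan/Hahn decomposition so that the constant comes out as the sharp $2D_i$ rather than a looser TV-times-sup estimate. An equivalent ``coupling'' phrasing avoids the decomposition entirely: normalizing $P_i^{\pm}:=\nu_i^{\pm}/D_i$ to probability measures, the difference above equals $D_i\,\mathbb{E}_{Z\sim P_i^{+},\,Z'\sim P_i^{-}}\big[\nabla\ell(w,Z)-\nabla\ell(w,Z')\big]$, whose norm is at most $D_i\cdot 2L$ since each gradient has norm at most $L$; this may read more cleanly but is the same estimate.
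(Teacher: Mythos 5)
Your proof is correct: the paper itself does not prove Lemma~\ref{lemmTV} but imports it from the cited reference, and your argument (uniform gradient bound $\Vert\nabla\ell(w,z)\Vert\le L$ from Assumption~\ref{assumLip}, integration against the signed measure $\mu_i-\bar\mu$ with the Jordan decomposition giving mass $D_i$ on each part, hence a pointwise bound $2LD_i$ that squares to $4L^2D_i^2$ and survives the expectation over $W_t$) is exactly the standard derivation behind that result. No gaps.
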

    
    \begin{proof}
    	For CFL, recall that the considered model $W=W_T$ and we denote $W^{(T)}=(W_1,W_2,...,W_T)$. We can have $\sum_{i=1}^NI(S_i;W)=\sum_{i=1}^NI(S_i;W_T)\leq \sum_{i=1}^NI(S_i;W^{(T)})$. We focus on the term $\sum_{i=1}^NI(S_i;W^{(T)})$ in the following and we have $ \sum_{i=1}^NI(S_i;W^{(T)}) \leq  \sum_{i=1}^N\sum_{t=1}^T I(W_{t+1};Z_t^i|W_{t}).$ This inequality holds based on the updates $\{W_t\}_{t=1}^T$ with a Markov structure and data processing inequality $I(W_{t+1};S_i)\leq I(W_{t+1};Z^i_t) $ since $S_i \leftrightarrow Z^i_t \leftrightarrow W_{t+1}$. We further have $\sum_{i=1}^N\sum_{t=1}^T I(W_{t+1};Z_t^i|W_{t})=I(W_{t+1};(Z_t)_{i=1}^N|W_{t})$ since the independence of data sources.  
    	
    	For any $t$, $ I(W_{t+1};(Z_t)_{i=1}^N|W_{t})=h(W_{t+1}|W_{t})-h(W_{t+1}|(Z_t)_{i=1}^N,W_{t})$.
    	Given that $W_{t}=w_t$ and translation does not affect the differential entropy, we have $ h(W_{t+1}|W_{t}=w_{t})= h(W_{t+1}-w_{t}|W_{t}=w_{t})$.  Based on this result  and the update rule of CFL, we can further have: $       h(W_{t+1}|W_{t}=w_t)=h(-\frac{\eta_t}{N}\sum_{i=1}^N \nabla\ell(w_{t};Z^i_t)+\epsilon_t|W_{t}=w_{t})$.

    	Notice that $\epsilon_t$ is independent of $-\frac{\eta_t}{N}\sum_{i=1}^N \nabla\ell(w_{t};Z^i_t)$, so we can bound the corresponding  expected squared-norm as: $   \mathbb{E}\big\Vert W_{t+1}-w_t\Vert^2
    	\leq  \frac{\eta_t^2}{N^2}\sum_{i=1}^N\mathbb{E}(\xi_i^2+4L^2D_i+\Vert G_t\Vert^2)+d\sigma^2  \leq \frac{\eta_t^2}{N^2}\sum_{i=1}^N(\xi_i^2+L^2(4D_i^2+1))+d\sigma^2$. These inequalities holds since Assumption~\ref{assumLip},~\ref{assumpbounded} and Lemma~\ref{lemmTV}.
    	\begin{comment}
    		\begin{equation}
    			\begin{aligned}
    				\mathbb{E}\big\Vert -\frac{\eta_t}{N}\sum_{i=1}^N \nabla\ell(w_{t};Z^i_t)+\epsilon_t \big\Vert^2 & \leq\frac{\eta_t^2}{N^2}\sum_{i=1}^N(\xi_i^2+4L^2D_i^2\\
    				&\quad+\Vert G_t\Vert^2)+d\sigma^2,\\
    			\end{aligned}
    		\end{equation}
    	\end{comment}

    	Among all bounded random variables $X$ with a  bound $B$, the Gaussian distribution $\mathcal{N}(0,\sqrt{\frac{B}{d}I_d})$ has the largest entropy, given by $\frac{d}{2}\log(\frac{2\pi e B}{d})$, so we can  upper bound $h(W_{t+1}|W_t)$ based on the above results.  Given that $h(W_{t+1}|W_{t},(Z_t)_{i=1}^N)=h(\epsilon_t)$ and the derived bound  of $h(W_{t+1}|W_t=w_t)$ holds for all values $w_{t}$, we  can derive the below upper bound of the concerned mutual information similar to the proof in~\cite{DBLP:conf/isit/PensiaJL18}:
    	\begin{equation}
    		\begin{aligned}
    			&I(W_{t+1};(Z_t^i)_{i=1}^N|W_t) \\
    			&\leq \frac{d}{2}\log[\frac{2\pi e(\frac{\eta_t^2}{N^2}\sum_{i=1}^N(\xi_i^2+L^2(4D_i^2+1))+d\sigma^2)}{d}\frac{d}{2\pi e d\sigma^2}]\\
    			&\leq \frac{\eta_t^2\sum_{i=1}^N[\xi_i^2+L^2(4D_i^2+1)]}{2\sigma^2N^2}.
    		\end{aligned}
    	\end{equation}

    	Based on the above results, we can  derive the generalization bound for CFL via Lemma~\ref{lemma:GFL} and complete the proof.

    \end{proof}
    
    \subsection{Proof of Theorem~\ref{thm2}}
    
    \begin{lemma}~\cite{DBLP:conf/colt/Neu21} \label{lemma1}Let $X$ and $Y$ be bounded  random variables taking values in $\mathbb{R}^d$. Let $\epsilon \sim \mathcal{N}(0,\sigma^2I_d)$ be independent of $X$ and $Y$, the relative entropy  between distributions of $X+\epsilon$ and $Y+\epsilon$ is bounded as: $ D_{kl}(P_{X+\epsilon}\Vert P_{Y+\epsilon}) \leq \frac{1}{2\sigma^2}\mathbb{E}\Vert X-Y\Vert^2$.
    \end{lemma}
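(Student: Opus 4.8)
The plan is to deduce the bound from two standard facts --- the closed form of the Kullback--Leibler divergence between two Gaussians with a common covariance, and the fact that relative entropy cannot increase under marginalization --- applied to a carefully chosen pair of joint laws. Write $\pi = P_{X,Y}$ for the joint distribution of the pair $(X,Y)$ (this is the distribution over which the right-hand side expectation is taken), and recall $\epsilon\sim\mathcal{N}(0,\sigma^2 I_d)$ is independent of $(X,Y)$.

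First I would introduce two joint distributions of a pair $(A,B)$, with $A=(x,y)\in\mathbb{R}^d\times\mathbb{R}^d$ and $B\in\mathbb{R}^d$: under $P$, let $A\sim\pi$ and, conditionally on $A=(x,y)$, let $B\sim\mathcal{N}(x,\sigma^2 I_d)$; under $Q$, let $A\sim\pi$ with the \emph{same} marginal and, conditionally on $A=(x,y)$, let $B\sim\mathcal{N}(y,\sigma^2 I_d)$. Because $\epsilon$ is independent of $(X,Y)$, the $B$-marginals are exactly $P_B=P_{X+\epsilon}$ and $Q_B=P_{Y+\epsilon}$, while the $A$-marginals agree, so $D_{kl}(P_A\Vert Q_A)=0$. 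Applying the chain rule for relative entropy, $D_{kl}(P_{A,B}\Vert Q_{A,B})=D_{kl}(P_A\Vert Q_A)+\mathbb{E}_{A\sim P_A}\!\left[D_{kl}(P_{B\mid A}\Vert Q_{B\mid A})\right]$, the first term vanishes and the conditional term equals $D_{kl}\!\left(\mathcal{N}(x,\sigma^2 I_d)\Vert\mathcal{N}(y,\sigma^2 I_d)\right)=\Vert x-y\Vert^2/(2\sigma^2)$ by a direct density computation; averaging over $(x,y)\sim\pi$ gives $D_{kl}(P_{A,B}\Vert Q_{A,B})=\frac{1}{2\sigma^2}\mathbb{E}\Vert X-Y\Vert^2$. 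Finally, since dropping the coordinate $A$ is a deterministic data-processing operation, $D_{kl}(P_{X+\epsilon}\Vert P_{Y+\epsilon})=D_{kl}(P_B\Vert Q_B)\le D_{kl}(P_{A,B}\Vert Q_{A,B})$, which is exactly the claim. Equivalently, one may skip the auxiliary variables and invoke joint convexity of $D_{kl}(\cdot\Vert\cdot)$ directly on the mixture representations $P_{X+\epsilon}=\int\mathcal{N}(x,\sigma^2 I_d)\,d\pi(x,y)$ and $P_{Y+\epsilon}=\int\mathcal{N}(y,\sigma^2 I_d)\,d\pi(x,y)$.

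The argument is essentially bookkeeping, so I do not expect a genuine obstacle; the two points that need care are that the mixing measure $\pi$ must be identical on both sides for the chain rule (or for joint convexity) to apply, and that the boundedness hypothesis on $X$ and $Y$ is what guarantees $\mathbb{E}\Vert X-Y\Vert^2<\infty$, keeping the bound finite and all densities and differential entropies well-defined.
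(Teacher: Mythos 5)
Your proof is correct. The paper does not actually prove this lemma --- it is imported verbatim from the cited reference (Neu et al., COLT 2021) --- and your argument (represent $P_{X+\epsilon}$ and $P_{Y+\epsilon}$ as Gaussian mixtures over the common coupling $\pi=P_{X,Y}$, apply the chain rule with vanishing $A$-marginal term and the exact Gaussian KL formula $\Vert x-y\Vert^2/(2\sigma^2)$, then marginalize via data processing, or equivalently invoke joint convexity of $D_{kl}$) is precisely the standard proof given there.
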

    
    \begin{lemma}~\cite{sun2021stability}\label{lemma2}
    	Let $\mathbf{P}\in\mathbb{R}^{N \times N}$ be the matrix whose elements are all $1/N$. Given any $k\in \mathbb{Z}^{+}$, the doubly stochastic matrix $\boldsymbol{\Theta}\in \mathbb{R}^{N \times N}$ satisfies the operator norm $\Vert \boldsymbol{\Theta}^k- \mathbf{P} \Vert\leq \lambda^k$, where $\lambda:=\max\{|
    	\lambda_2|,|\lambda_N|\}$.
    \end{lemma}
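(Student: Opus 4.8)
The plan is to use that $\boldsymbol{\Theta}$ is real symmetric, so the operator norm in question is a spectral radius, and then to remove the all-ones direction cleanly via $\mathbf{P}$. First I would set up the spectral decomposition. Since $\boldsymbol{\Theta}=\boldsymbol{\Theta}^T$, it admits an orthonormal eigenbasis $v_1,\dots,v_N$ with real eigenvalues, and the constraint $I\succeq\boldsymbol{\Theta}\succeq -I$ places every eigenvalue in $[-1,1]$. The identity $\boldsymbol{\Theta}\mathbf{1}_N=\mathbf{1}_N$ shows that $1$ is an eigenvalue with eigenvector $\mathbf{1}_N$, while the nondegeneracy condition $\mathrm{null}\{I-\boldsymbol{\Theta}\}=\mathrm{span}\{\mathbf{1}_N\}$ forces this eigenvalue to be simple; normalizing gives $v_1=\mathbf{1}_N/\sqrt{N}$, hence $v_1v_1^T=\frac{1}{N}\mathbf{1}_N\mathbf{1}_N^T=\mathbf{P}$. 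Ordering the rest as $1>\lambda_2\ge\cdots\ge\lambda_N\ge -1$, the quantity $\lambda=\max\{|\lambda_2|,|\lambda_N|\}$ equals $\max_{2\le i\le N}|\lambda_i|$, because any interior eigenvalue is sandwiched by $\lambda_N\le\lambda_i\le\lambda_2$ and therefore has smaller absolute value than one of the two extremes.

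Next I would compute the $k$-th power directly. From $\boldsymbol{\Theta}=\sum_{i=1}^N\lambda_i v_iv_i^T$ and orthonormality we get $\boldsymbol{\Theta}^k=\sum_{i=1}^N\lambda_i^k v_iv_i^T$, and since $\lambda_1^k=1$ and $v_1v_1^T=\mathbf{P}$ this becomes $\boldsymbol{\Theta}^k-\mathbf{P}=\sum_{i=2}^N\lambda_i^k v_iv_i^T$. Subtracting $\mathbf{P}$ annihilates precisely the $\lambda_1=1$ eigenspace, leaving a symmetric matrix whose eigenvalues are $0$ (on $v_1$) together with $\{\lambda_i^k\}_{i=2}^N$. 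Because the operator $2$-norm of a symmetric matrix equals its largest eigenvalue in absolute value, $\Vert\boldsymbol{\Theta}^k-\mathbf{P}\Vert=\max_{2\le i\le N}|\lambda_i|^k=\big(\max_{2\le i\le N}|\lambda_i|\big)^k=\lambda^k$, using that $x\mapsto x^k$ is increasing on $[0,\infty)$ so the maximizing index is unchanged. This in fact yields the equality $\Vert\boldsymbol{\Theta}^k-\mathbf{P}\Vert=\lambda^k$, which is stronger than the stated bound.

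As a robustness check that avoids the full eigenbasis, I would note the purely algebraic route: the identities $\boldsymbol{\Theta}\mathbf{P}=\mathbf{P}\boldsymbol{\Theta}=\mathbf{P}$ (from $\boldsymbol{\Theta}\mathbf{1}_N=\mathbf{1}_N$ and $\mathbf{1}_N^T\boldsymbol{\Theta}=\mathbf{1}_N^T$) and $\mathbf{P}^2=\mathbf{P}$ (from $\mathbf{1}_N^T\mathbf{1}_N=N$) give $(\boldsymbol{\Theta}-\mathbf{P})^2=\boldsymbol{\Theta}^2-\boldsymbol{\Theta}\mathbf{P}-\mathbf{P}\boldsymbol{\Theta}+\mathbf{P}^2=\boldsymbol{\Theta}^2-\mathbf{P}$, whence an easy induction yields $(\boldsymbol{\Theta}-\mathbf{P})^k=\boldsymbol{\Theta}^k-\mathbf{P}$; submultiplicativity then gives $\Vert\boldsymbol{\Theta}^k-\mathbf{P}\Vert\le\Vert\boldsymbol{\Theta}-\mathbf{P}\Vert^k=\lambda^k$. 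I expect the only point requiring genuine care to be the spectral identification of $\boldsymbol{\Theta}-\mathbf{P}$ — specifically, arguing that subtracting $\mathbf{P}$ removes exactly the simple eigenvalue $1$ and leaves the remaining eigenvalues untouched, which is where both the symmetry of $\boldsymbol{\Theta}$ (so that $\mathbf{P}$ and $\boldsymbol{\Theta}$ are simultaneously diagonalizable) and the simplicity guaranteed by $\mathrm{null}\{I-\boldsymbol{\Theta}\}=\mathrm{span}\{\mathbf{1}_N\}$ are essential.
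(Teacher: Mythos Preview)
The paper does not actually prove this lemma: it is quoted verbatim from \cite{sun2021stability} and then invoked as a tool in the proof of Theorem~\ref{thm2}, so there is no in-paper argument to compare against. Your spectral-decomposition proof is correct and is the standard route; the alternative algebraic identity $(\boldsymbol{\Theta}-\mathbf{P})^k=\boldsymbol{\Theta}^k-\mathbf{P}$ that you sketch is in fact exactly the relation the paper exploits (see the line ``$(\boldsymbol{\Theta}-\mathbf{P})\boldsymbol{\Theta}=(\boldsymbol{\Theta}-\mathbf{P})(\boldsymbol{\Theta}-\mathbf{P})$'' in the proof of Theorem~\ref{thm2}). One small caveat: you invoke the hypotheses $I\succeq\boldsymbol{\Theta}\succeq -I$ and $\mathrm{null}\{I-\boldsymbol{\Theta}\}=\mathrm{span}\{\mathbf{1}_N\}$, which appear only in a definition the authors ultimately commented out; the version of the assumptions retained in the text guarantees the eigenvalue bounds via nonnegativity and row-stochasticity (Perron--Frobenius), and the simplicity of the eigenvalue $1$ is asserted rather than derived (the paper simply states $0\le\lambda<1$). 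This does not affect the validity of your argument, but you may want to justify those two facts from the stated hypotheses rather than the excised ones.
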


    \begin{proof}
    	Referring to~\cite{DBLP:conf/colt/Neu21}, we   define a ``ghost" local iterates as $\Tilde{W}_{t+1}^i=\sum_{j=1}^N\theta_{i,j}\Tilde{W}_{t}^j-\eta_t\nabla\ell(\Tilde{W}_{t}^i,\Tilde{Z}_t^i)+\Tilde{\epsilon}_t^i$, where $\Tilde{Z}_t^i$ is sampled from another data set $\Tilde{S}_i$  independently drawn from $\mu_i$. The corresponding average model is $\Tilde{W}_{t+1}=\frac{1}{N}\sum_{i=1}^N\Tilde{W}_{t+1}^i$. These ``ghost" iterates are thus constructed via independently drawn data $\{\Tilde{Z}_i^t\}_{i \in [N]}$ and noise $\{\Tilde{\epsilon}_t^i\}_{i \in [N]}$ at each round. Similar to~\cite{DBLP:conf/colt/Neu21}, conditioned on $W^{(t)}=\Tilde{W}^{(t)}=w^{(t)}$, bounding  $I(W_{t+1};(Z_t^i)_{i=1}^N|W^{(t)})$ can be reduced to bound the relative entropy between the distribution of $W_{t+1}$ given $\{Z_t^i\}_{i=1}^N$ and the  distribution of $\Tilde{W}_{t+1}$ as follows:
    	\begin{equation}
    		\begin{aligned}
    			I(W_{t+1};(Z^i_t)_{i=1}^N|W^{(t)})
    			&= D_{kl}(P_{W_{t+1}|W^{(t)},(Z_t^i)_{i=1}^N}\Vert P_{\Tilde{W}_{t+1}|W^{(t)}}).
    		\end{aligned}
    	\end{equation}

    	\begin{comment}
    		\begin{equation}
    			\begin{aligned}
    				\sum_{t=1}^TI(W_{t+1};Z_1^t,...,Z_N^t|W^t) &=  \sum_{t=1}^T\mathbb{E}[D_{kl}(P_{W_{t+1}|W^{t},Z_t}\Vert P_{W_{t+1}|W^t})]\\
    				&=\sum_{t=1}^T\mathbb{E}[D_{kl}(P_{W_{t+1}|W^{t},Z_t}\Vert P_{\Tilde{W}_{t+1}|W^t})]\\
    			\end{aligned}
    		\end{equation}
    	\end{comment}
    	For simplicity, we also denote $(W_t^i)_{i =1}^N$ as $\mathbf{W}_t$, $(\Tilde{W}_t^i)_{i =1}^N$ as $\Tilde{\mathbf{W}}_t$, $(-\eta_t\nabla\ell(W_t^i,Z_t^i)+\epsilon_t^i)_{i =1}^N$ as   $\boldsymbol{\psi}_t$, and $(-\eta_t\nabla\ell(\Tilde{W}_t^i,\Tilde{Z}_t^i)+\epsilon_t^i)_{i =1}^N$ as  $\Tilde{\boldsymbol{\psi}}_t$ in the following. Besides, we also denote $\nabla\ell(W_{t}^i,Z_t^i)$ as $\mathbf{g}_t^i$ and $\nabla\ell(\Tilde{W}_{t}^i,\Tilde{Z}_t^i)$ as  $\Tilde{\mathbf{g}}_t^i$ below. Based on Lemma~\ref{lemma1} and the update rule of DFL, we can have:
    	\begin{equation}\label{thm2:derived}
    		\begin{aligned}
    			&D_{kl}(P_{W_{t+1}|W^{t},(Z_t^i)_{i =1}^N}\Vert P_{\Tilde{W}_{t+1}|W^t})\\
    			&\leq \frac{N^2}{2\sigma^2}\mathbb{E}\Vert \frac{1}{N}\sum_{i=1}^N \big[\sum_{j=1}^N\theta_{i,j}(W_{t}^j-\Tilde{W}_{t}^j)-\eta_t(\mathbf{g}_t^i-\Tilde{\mathbf{g}}_t^i)\big]\Vert^2\\
    			&= \frac{N^2}{2\sigma^2}\mathbb{E}\Vert \frac{1}{N}\sum_{i=1}^N \big[\sum_{j=1}^N\theta_{i,j}(W_{t}^j-\Tilde{W}_{t}^j)-W_t+W_t-\eta_t(\mathbf{g}_t^i-\Tilde{\mathbf{g}}_t^i)\big]\Vert^2\\
    			&\leq \mathbb{E}\Big[\frac{1}{2\sigma^2}\Vert (\boldsymbol{\Theta}-\mathbf{P})\mathbf{W}_{t}\Vert^2_F+\frac{1}{2\sigma^2}\Vert (\boldsymbol{\Theta}-\mathbf{P})\Tilde{\mathbf{W}}_{t}\Vert^2_F+\frac{\eta_t^2\mathcal{V}}{\sigma^2}\Big]\\
    			&= \mathbb{E}\Big[\frac{1}{2\sigma^2}\Vert (\boldsymbol{\Theta}-\mathbf{P})\boldsymbol{\Theta}\mathbf{W}_{t-1}+ (\boldsymbol{\Theta}-\mathbf{P})\boldsymbol{\psi}_t\Vert^2_F\\
    			&\quad+\frac{1}{2\sigma^2}\Vert (\boldsymbol{\Theta}-\mathbf{P})\boldsymbol{\Theta}\Tilde{\mathbf{W}}_{t-1}+ (\boldsymbol{\Theta}-\mathbf{P})\Tilde{\boldsymbol{\psi}}_t\Vert^2_F+\frac{\eta_t^2\mathcal{V}}{\sigma^2}\Big]\\
    			&\leq\mathbb{E} \Big[\frac{\lambda^2}{2\sigma^2}\Vert (\boldsymbol{\Theta}-\mathbf{P})\mathbf{W}_{t-1}\Vert^2_F+\frac{\lambda^2}{2\sigma^2}\Vert (\boldsymbol{\Theta}-\mathbf{P})\Tilde{\mathbf{W}}_{t-1}\Vert^2_F\\
    			&\quad+\frac{\lambda^2\eta_{t-1}^2\mathcal{V}}{N^2\sigma^2}+\frac{\lambda^2d}{N^2}+\frac{\eta_t^2\mathcal{V}}{\sigma^2}\Big]\\
    			&\leq\mathbb{E}\Big[\sum_{k=1}^{t}\frac{\lambda^{2k}\eta_{t-k}^2\mathcal{V}}{N^2\sigma^2}+ \sum_{k=1}^{t}\frac{\lambda^{2k}d}{N^2}+\frac{\eta_t^2\mathcal{V}}{\sigma^2}\Big]\\
    			&\leq\mathbb{E}\Big[\frac{\lambda_T\mathcal{V}}{N^2\sigma^2}\sum_{k=1}^{t}\eta_{t-k}^2+ \frac{\lambda_Td}{N^2}+\frac{\eta_t^2\mathcal{V}}{\sigma^2}\Big],
    		\end{aligned}
    	\end{equation}
    	where $\lambda_T=\max_{t\in[T]}\sum_{k=1}^t\lambda^{2k}$, $\mathcal{V}=\sum_{i=1}^N[L^2(4D_i^2+1)+\xi_i^2]$ and $\sigma^2=\sum_{i=1}^N\sigma_i^2$. The second inequality holds since Assumption~\ref{assumLip},~\ref{assumpbounded}, Lemma~\ref{lemmTV}.  The third inequality holds based on the fact $(\boldsymbol{\Theta}-\mathbf{P})\boldsymbol{\Theta}=(\boldsymbol{\Theta}-\mathbf{P})(\boldsymbol{\Theta}-\mathbf{P})$ and Lemma~\ref{lemma2}. The fourth inequality holds since we assume $W_0=\mathbf{0}$.
    	
    	Similarly, according to Lemma~\ref{lemma:GFL}, we can bound the generalization error for DFL using the derived results in Eq.~\eqref{thm2:derived}, completing the proof.
    	\begin{comment}
    		\begin{equation}
    			\begin{aligned}
    				|gen(\mu,P_{W|S})|^2 &\leq \sum_{t=1}^T \mathbb{E} \Big[\sum_{k=1}^{t}\frac{2R^2\lambda^{2k}\mathcal{V}(t-k)}{nN^4\sigma^2}\\
    				&\quad + \sum_{k=1}^{t}\frac{\lambda^{2k}2R^2d}{N^4n}+\frac{2R^2\mathcal{V}(t)}{\sigma^2N^2n}\Big].
    			\end{aligned}
    		\end{equation}
    	\end{comment}
    \end{proof}

\clearpage
\bibliographystyle{ieeetr}

\bibliography{IEEEtran}

\end{document}